\newcommand\bell{\ensuremath{\bm{\ell}}}
\newcommand\Bell{\ensuremath{\bm{L}}}
\def\li{\ell_i}
\def\lj{\ell_j}
\def\lo{\varnothing}
\def\lt{\round{\lo}}
\def\r{r_\ell(i,j)}
\def\h#1#2{\operatorname{h}_{#1}\left({#2}\right)}
\def\hh#1#2{\hat{\operatorname{h}}_{#1}\left({#2}\right)}
\let\BigO\LDAUOmicron         
\let\LittleO\LDAUomicron      
\let\BigOmega\LDAUOmega       
\let\Probability\Prob    
\let\Expected\Ex         
\newcommand{\Exp}[1]{\ensuremath{\exp\left(#1\right)}}
\def\hhb#1{\hh{b}{#1}}
\title{Simple Load Balancing}
\author{Petra Berenbrink}{Universität Hamburg, Germany}{petra.berenbrink@uni-hamburg.de}{}{}
\author{Tom Friedetzky}{Durham University, U.K.}{tom.friedetzky@dur.ac.uk}{https://orcid.org/0000-0002-1299-5514}{}
\author{Dominik Kaaser}{Universität Hamburg, Germany}{dominik.kaaser@uni-hamburg.de}{https://orcid.org/0000-0002-2083-7145}{}
\author{Peter Kling}{Universität Hamburg, Germany}{peter.kling@uni-hamburg.de}{https://orcid.org/0000-0003-0000-8689}{}
\authorrunning{P. Berenbrink, T. Friedetzky, D. Kaaser, and P. Kling}
\subjclass{%
    Mathematics of computing → Probability and statistics → Stochastic processes
}
\keywords{%
    load balancing,
    balls and bins,
    stochastic processes
}
\begin{document}
\maketitle
\begin{abstract}
{We consider the following load balancing process for $m$ tokens distributed
arbitrarily among $n$ nodes connected by a complete graph: In each time step a
pair of nodes is selected uniformly at random. Let $\ell_1$ and $\ell_2$ be
their respective number of tokens. The two nodes exchange tokens such that they
have $\ceil{(\ell_1 + \ell_2)/2}$ and $\floor{(\ell_1 + \ell_2)/2}$ tokens,
respectively. We provide a simple analysis showing that this process reaches
almost perfect balance within $\BigO{n\log{n} + n \log{\Delta}}$ steps, where
$\Delta$ is the maximal initial load difference between any two nodes.
}
\end{abstract}

{\section{Introduction}\label{sec:introduction}

We consider a load balancing problem for $m$ tokens on $n$ identical nodes
connected by a complete graph. Each node starts with some number of tokens and
the objective is to distribute the tokens as evenly as possible. A natural and
simple process to reach this goal is as follows: At each time step, a pair of
nodes $(u, v)$ is chosen uniformly at random and their loads (number of tokens)
are balanced as evenly as possible. We provide a simple and elementary proof
that this process takes, \whpshort/, $\BigO{n \log{n} + n \log{\Delta}}$ time
steps to reach almost perfect balance. Here, $\Delta$ is the maximal initial
load difference between any two nodes\footnote{%
    We assume $\Delta > 0$ (such that $\log{\Delta}$ is well defined);
    otherwise the system is already perfectly balanced.
} and almost perfect balance means that all nodes have a load in
$\set{\round{\lo}-1, \round{\lo}, \round{\lo}+1}$, where $\lo = m/n$ and
$\round{\lo}$ is $\lo$ rounded to the nearest integer.

\paragraph{Related Work}
There is a vast body of literature on load balancing, even when considering
only theoretical results. As it is beyond the scope of this article to provide
a complete survey, we focus on results for discrete load balancing on complete
graphs and processes with sequential (or at least independent) load balancing
actions. For an overview of results on general graphs, processes with multiple
correlated load balancing actions (like the so-called \emph{diffusion model}),
and other variants we refer the reader to~\cite{conf/focs/SauerwaldS12,
journals/algorithmica/BerenbrinkFKMNW18}.

We should first like to note that the result we prove may almost be considered
folklore and variants of it have been proved in different contexts, for example
in~\cite{conf/nca/MocquardAS16} (who use this to prove results in a specific
distributed computational model called \emph{population model})
or~\cite{conf/focs/SauerwaldS12} (who study load balancing on general graphs;
see below). Nevertheless, we believe that this load balancing setting is
important enough (variants of it appearing as building blocks in many
distributed algorithms) that there is merit in providing a dedicated,
intuitive, and elementary proof.

A related load balancing model is the \emph{matching model}, also known as the
\emph{dimension exchange model}. Here, each time step an arbitrary matching of
nodes is given and any two matched nodes balance their load. In our case, the
matching in each round consists of a single edge chosen independently and
uniformly at random. A rather general way to analyze this model on arbitrary
graphs was introduced by \textcite{conf/focs/RabaniSW98}. The authors studied
how far the discrete load balancing process diverges from its continuous
counterpart (where tokens can be split arbitrarily). This idea was later
extended and used by \textcite{conf/focs/SauerwaldS12} to prove the currently
best bounds for the matching model (and others). For the complete graph and
assuming that each round a random matching is used, their results imply a bound
of $\LDAUOmicron{\log m}$ rounds (which translates to $\LDAUOmicron{n \log m}$
time steps in our model, as they use matchings of size $\LDAUTheta{n}$) to
reduce the difference between maximum and minimum load to some (unspecified)
constant. The time bound holds with probability $1 - \exp\bigl(-{(\ln
n)}^{\LDAUTheta{1}}\bigr)$, which is slightly weaker than our probabilistic
guarantee.

Another related strain of literature considers discrete, sequential load
balancing, but with the restriction that only one token can move per time step.
\Textcite{conf/podc/Goldberg04} considered a simple local search process in
this scenario: Tokens are activated by an independent exponential clock of rate
$1$. Upon activation, a token samples a random node and moves there if that
node's load is smaller than the load at the token's current host node. It has
recently been proved~\cite{conf/ipps/BerenbrinkKLM17} that this process reaches
perfect balance in $\LDAUOmicron{\log n + \log(n) \cdot n^2/m}$ time (both in
expectation and \whp/), which is asymptotically tight.
}
{\subsection{Model and Notation}\label{sec:model}

Assume $m$ indistinguishable tokens are distributed arbitrarily among $n$ nodes
of a complete graph. Define the \emph{load vector} $\Bell(t) = \left(\ell_1(t),
\dots, \ell_n(t) \right) \in \Z^n$ at time $t$, where $\ell_i(t)$ is the number
of tokens (load) assigned to node $i$ at time $t$. The \emph{discrepancy}
$\Delta\Bell(t)$ at time $t$ is the maximal load difference between any two
nodes. Let $\Delta = \Delta\Bell(0)$ be the \emph{initial discrepancy}. We
define $\lo = m/n$ as the \emph{average load} and use $\lt$ to denote the
average load rounded to the nearest integer.

Given the load vector $\Bell(t)$ at time $t$, our load balancing process
performs the following actions during time step $t$:
\begin{enumerate*}
\item

    Two nodes $u$ and $v$ are selected uniformly at random without replacement.

\item

    Their loads are updated according to $\ell_u(t+1)= \ceil*{\left(\ell_u(t) +
    \ell_v(t)\right)/2}$ and $\ell_v(t+1) = \floor*{\left(\ell_u(t) +
    \ell_v(t)\right)/2}$.

\end{enumerate*}

For the sake of the analysis we assume that tokens are ordered (arbitrarily) on
each node. Based on this order, we define the \emph{height} $h_b(t)$ of a token
$b$ at time $t$ as the number of tokens that precede $b$ in this order. The
\emph{normalized height} $\hhb{t} = \h{b}{t} - \lt$ enumerates the tokens
relative to the rounded average $\lt$. Furthermore, we initially assume that
balancing operations between two nodes operate in \emph{stack mode}, where the
topmost tokens of the node with higher load are moved to the node with lower
load (see \cref{fig:balancing-simple}). For the second part of our analysis
(\nameref{sec:phase2}) we assume that balancing operations operate in
\emph{skip mode}, where every second token is moved (see
\cref{fig:balancing-every-second}). Finally, in the third part of our analysis
(\nameref{sec:phase3}), we assume that the excess tokens are first shuffled
before the balancing operates in stack mode. Note that the mode does not
influence the balancing process but merely facilitates the analysis.

\begin{figure}
\captionsetup[subfigure]{justification=centering}
\hfill
\begin{subfigure}{0.22\linewidth}
\centering
\includegraphics[width=\linewidth]{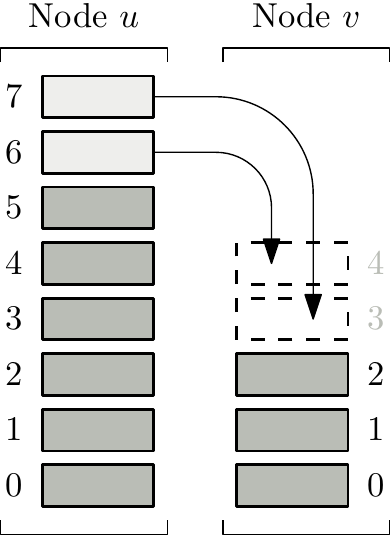}
\caption{Stack Mode}\label{fig:balancing-simple}
\end{subfigure}
\hfill
\begin{subfigure}{0.22\linewidth}
\centering
\includegraphics[width=\linewidth]{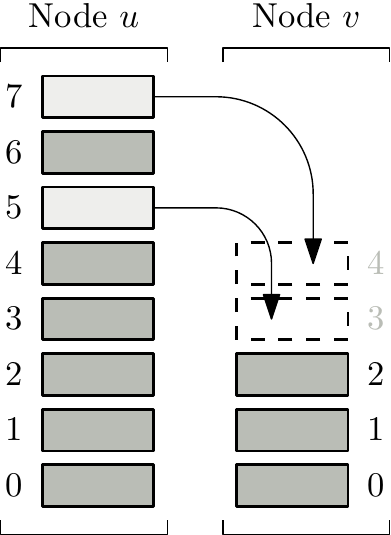}
\caption{Skip Mode}\label{fig:balancing-every-second}
\end{subfigure}
\hfill{}
\caption{%
    Illustration of the different modes assumed for balancing operation during
    the analysis.
}\label{fig:balancing}
\end{figure}

}
{\section{Analysis}\label{sec:analysis}

We split the analysis into three phases. In \nameref{sec:phase1} we use a
potential function argument to show that, \whpshort/, it takes $\BigO{n\log{n}
+ n\log{\Delta}}$ time steps until at most $n/2$ nodes have a load larger than
$\lo + \LDAUTheta{1}$. In \nameref{sec:phase2} we look at individual tokens and
prove that, \whpshort/, it takes $\BigO{n\log{n}}$ more time steps until all
nodes have load at most $\lo + \LDAUTheta{1}$. Finally, in \nameref{sec:phase3}
we prove that, \whpshort/, it takes $\BigO{n\log{n}}$ further time steps until
the maximum load is at most $\lt + 1$. Using a symmetry-based argument we get a
similar bound on the minimum load and, thus, the following theorem.
\begin{theorem}\label{thm:main-result}
Let $\Bell(0) \in \mathbb{N}_{0}^{n}$ be the initial load vector of the load
balancing process on $n$ nodes and let $\Delta = \Delta\Bell(0)$ be the
initial discrepancy. Let furthermore $T$ be the first time when all nodes
have load in $\set{\lt-1, \lt, \lt+1}$. \Whp/, $T = \BigO{n \log \Delta + n
\log n}$.
\end{theorem}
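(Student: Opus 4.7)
The plan is to chain the three phase bounds outlined in the excerpt and then invoke a symmetry argument for the lower tail on the minimum load. Let $T_1, T_2, T_3$ denote the three phase durations, bounded \whp/ by $\BigO{n\log n + n\log \Delta}$, $\BigO{n\log n}$, and $\BigO{n\log n}$, respectively. A union bound over the three high-probability events then gives $T \le T_1 + T_2 + T_3 = \BigO{n\log n + n\log \Delta}$ \whp/, controlling the upper tail on the maximum load. For the matching lower bound $\min_i \ell_i(T) \ge \lt - 1$, I would use a holes-versus-tokens symmetry: replacing each load $\ell_i$ by $L - \ell_i$ for any $L \ge \max_j \ell_j(0)$ yields a process with identical dynamics (since $L - \lceil x \rceil = \lfloor L - x \rfloor$), and applying the three-phase argument to the complementary process swaps the roles of overloaded and underloaded nodes to give the symmetric bound.

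For Phase 1, I would analyse the quadratic potential $\Phi(t) = \sum_i {(\ell_i(t) - \lo)}^2$. When an edge $(u,v)$ with loads $a \ge b$ is selected, the joint contribution of $\{u,v\}$ to $\Phi$ drops by $\BigOmega{{(a - b)}^2}$ up to an $\BigO{1}$ rounding offset. Averaging over the $\binom{n}{2}$ equally likely pairs gives an expected one-step decrease of at least $\BigOmega{\Phi(t)/n} - \BigO{1}$, i.e., multiplicative drift. A standard geometric-supermartingale concentration argument then drives $\Phi$ below $\BigO{n}$ after $\BigO{n \log \Phi(0)} = \BigO{n \log n + n \log \Delta}$ steps \whp/, at which point Markov's inequality ensures that at most $n/2$ nodes can exceed $\lo + \BigTheta{1}$.

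For Phases 2 and 3 the appropriate tool is individual-token tracking under the skip and shuffle modes. With at most $n/2$ overloaded nodes after Phase 1, an excess token above $\lo + \BigTheta{1}$ lives on an overloaded node that is paired with an underloaded neighbour with probability $\BigOmega{1}$ per step; skip mode then guarantees that the token itself, rather than merely some token on its node, is transferred with probability $\BigOmega{1/n}$. A stochastic-domination argument against independent Bernoulli trials yields geometric decrease of the number of excess tokens and eliminates them all in $\BigO{n \log n}$ steps \whp/. Phase 3 is analogous with the shuffle mode and reduces the remaining $\BigTheta{1}$ gap down to $1$.

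The main obstacle is Phase 1: although the expected drift is clear, converting it into a \whp/ bound is delicate because the additive $\BigO{1}$ rounding offset becomes non-negligible once $\Phi = \BigTheta{n}$, and the increment $\Phi(t+1) - \Phi(t)$ has heavy variance driven by whichever pair of loads is sampled. The cleanest route, I expect, is to handle only the regime $\Phi = \LittleOmega{n}$ with the drift argument (where the rounding is negligible) and then absorb the remaining $\BigTheta{1}$-load gap into Phases 2 and 3, rather than attempting to drive $\Phi$ all the way down to $\BigO{1}$.
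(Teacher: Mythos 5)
Your proposal follows the paper's own three-phase decomposition essentially step for step: the quadratic potential $\Phi$ with multiplicative drift for Phase~1, token-level tracking under skip and shuffle mode for Phases~2 and~3, and a complementation coupling for the bound on the minimum load (the paper negates the load vector and swaps the roles of the chosen pair; your $\ell_i \mapsto L - \ell_i$ is the same symmetry). Two points in your sketch deserve correction. First, the Phase~1 concentration step you flag as the main obstacle needs neither a supermartingale inequality nor any variance control: the per-step drop $\delta(\bell,i,j) = \bigl({(\ell_i-\ell_j)}^2 - r^2\bigr)/2$ is always nonnegative, so $\Phi$ is monotone non-increasing, and conditioned on the load vector at the start of a block of $n$ steps with $\Phi \geq n$ the expected end-of-block potential is below $\tfrac{3}{4}\Phi$; Markov's inequality then makes each block ``successful'' (a factor-$7/8$ drop) with probability at least $1/7$, and a Chernoff bound over $\BigO{\log n + \log \Delta}$ blocks finishes the phase. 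The monotonicity of $\Phi$ is also what entitles Phase~2 to assume $\Phi \leq n$ --- and hence fewer than $n/2$ nodes of load at least $\lo + c$ --- at \emph{every} step after $T_1$, not only at the single time $T_1$; your sketch uses this persistence implicitly without justifying it.

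Second, your Phase~2 mechanism is misstated: a single successful transfer does not eliminate an excess token. In skip mode the token's normalized height only roughly halves (it moves to about the average of its old height and the partner's load), and after Phase~1 a token may still have normalized height as large as $\sqrt{n}$, since only the bound $\Phi < n$ is available. So the quantity that decays geometrically is each individual token's normalized height, not the number of excess tokens; each token needs $\BigTheta{\log n}$ successful steps, each occurring with probability at least $1/n$, which still yields $\BigO{n \log n}$ steps with high probability per token, followed by a union bound over the at most $n$ tokens above average. With these two repairs your plan coincides with the paper's proof; your final suggestion --- stop the drift argument once $\Phi = \BigO{n}$ and absorb the remaining constant-size gap into Phases~2 and~3 --- is exactly what the paper does.
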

Observe that \cref{thm:main-result} is tight for polynomial $\Delta$: with
constant probability there are nodes that are not selected at all during the
first $\LittleO{n \log{n}}$ time steps.
}
{\subsection[Phase~1]{Phase 1: Potential Function Analysis}\label{sec:phase1}

We analyze the process with the potential function defined via
\begin{equation}
  \Phi(\bell)
= \sum_{i=1}^{n} {\left( \ell_i - \lo \right)}^2
\end{equation}
for a load vector $\bell \in \N_0^n$.
\begin{lemma}\label{lem:phase1}
Let $T_1$ be the first time step for which $\Phi(\Bell(T_1)) < n$. \Whpshort/, $T_1
= \BigO{n\log{n} + n \log{\Delta}}$.
\end{lemma}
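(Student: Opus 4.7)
The plan is to establish a multiplicative drift for $\Phi$ and convert it into a high-probability time bound via a rescaled, stopped supermartingale. Fix a load vector $\bell$ and suppose the pair $(i,j)$ is selected; a direct expansion of the squares shows that the two affected summands in $\Phi$ decrease by $(\ell_i - \ell_j)^2/2$ if $\ell_i + \ell_j$ is even and by $((\ell_i - \ell_j)^2 - 1)/2$ if $\ell_i + \ell_j$ is odd, so the decrease is always at least $((\ell_i - \ell_j)^2 - 1)/2$. Combined with the identity $\sum_{i<j} (\ell_i - \ell_j)^2 = n\,\Phi(\bell)$ (which follows by expanding the squares and using $\sum_i (\ell_i - \lo) = 0$) and averaging over the uniformly chosen pair, I obtain
\[
  \mathbb{E}\bigl[\Phi(\Bell(t+1)) \mid \Bell(t) = \bell\bigr]
  \;\leq\;
  \Phi(\bell) \,-\, \frac{\Phi(\bell)}{n-1} \,+\, \frac{1}{2}.
\]
In particular, whenever $\Phi(\bell) \geq n$ the fractional term swallows the additive $1/2$, giving the clean multiplicative drift $\mathbb{E}[\Phi(\Bell(t+1)) \mid \Bell(t)] \leq \bigl(1 - 1/(2(n-1))\bigr)\,\Phi(\Bell(t))$.

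To turn this into a high-probability bound on $T_1$ I set $\alpha = 1/(1 - 1/(2(n-1)))$ and define the stopped, rescaled process $Y(t) = \alpha^{t \wedge T_1}\,\Phi(\Bell(t \wedge T_1))$. The drift inequality (which holds for every $t < T_1$) makes $Y$ a nonnegative supermartingale, so $\mathbb{E}[Y(t)] \leq Y(0) = \Phi(\Bell(0)) \leq n\Delta^2$, where the final step uses the crude bound $|\ell_i(0) - \lo| \leq \Delta$. On the event $\{T_1 > t\}$ one has $\Phi(\Bell(t)) \geq n$ and hence $Y(t) \geq n\alpha^t$, so Markov's inequality yields $\Pr[T_1 > t] \leq \Delta^2/\alpha^t$. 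Since $\log\alpha = \Theta(1/n)$, choosing $t = C\,(n\log n + n\log\Delta)$ for a sufficiently large constant $C$ drives this failure probability below any prescribed inverse polynomial in $n$, proving the \whpshort/ bound.

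The main technical subtlety is the additive $1/2$ term coming from odd-parity balancings: it prevents a purely multiplicative drift all the way down to $\Phi = 0$, so a naive iteration of the expectation bound stalls around $\Phi = \BigTheta{n}$. Stopping the process at $T_1$ (equivalently, restricting attention to the regime $\Phi \geq n$) is precisely what absorbs this slack and keeps the supermartingale computation clean; beyond that, the argument is standard drift-and-concentration bookkeeping, and the $n\log n$ summand in the final bound is the price paid for converting the expectation estimate into an inverse-polynomial tail bound.
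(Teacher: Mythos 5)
Your proposal is correct, and the drift computation in the first half coincides with the paper's: the same potential $\Phi$, the same per-pair decrease of at least $\bigl((\ell_i-\ell_j)^2-1\bigr)/2$, the same identity $\sum_{i<j}(\ell_i-\ell_j)^2 = n\,\Phi(\bell)$, and the same conclusion that the expected potential after one step is at most $\Phi(\bell) - \Phi(\bell)/(n-1) + 1/2$. Where you genuinely diverge is in converting this drift into a high-probability bound. The paper partitions time into rounds of $n$ steps, iterates the expectation bound within a round to get a constant-factor expected decay, applies Markov's inequality to each round to define ``successful'' rounds occurring with probability at least $1/7$, and then uses a Chernoff bound on the number of successful rounds among the first $\BigO{\log n + \log\Delta}$ rounds. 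You instead observe that for $\Phi \geq n$ the additive $1/2$ is absorbed into a clean multiplicative drift $1 - 1/(2(n-1))$, form the stopped rescaled supermartingale $Y(t) = \alpha^{t\wedge T_1}\Phi(\Bell(t\wedge T_1))$, and apply Markov's inequality once at the end, using $\Phi(\Bell(0)) \leq n\Delta^2$ and the fact that $Y(t) \geq n\alpha^t$ on $\{T_1 > t\}$. Both arguments are sound and yield the same asymptotic bound with inverse-polynomial failure probability; yours is shorter, avoids the round bookkeeping with $\rho$ and $\mu$, and directly gives the cleaner tail $\Pr[T_1 > t] \leq \Delta^2 e^{-\Theta(t/n)}$, at the price of invoking the optional-stopping/supermartingale formalism, whereas the paper's route uses only Markov and binomial Chernoff bounds, in keeping with its stated aim of an elementary exposition.
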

\begin{proof}
We start by analyzing the expected change of the potential during one time
step. Let $\delta(\bell, i,j)$ be the potential drop of a fixed load vector
$\bell=(\ell_1, \dots,\ell_n)\in\N_0^n$ when nodes $i$ and $j$ are
balancing. Then
\begin{equation}
  \delta(\bell, i, j)
= {(\li - \lo)}^2 + {(\lj - \lo)}^2 - {\left(\ceil*{\frac{\li + \lj}{2}} - \lo\right)}^2 - {\left(\floor*{\frac{\li + \lj}{2}} - \lo\right)}^2
\enspace.
\end{equation}
We define the discretization error $\r$ as $1$ if $\li + \lj$ is odd and $0$
otherwise. This allows us to expand and simplify the above expression to get
\begin{equation}\label{eq:pot0}
\begin{aligned}
\delta(\bell, i, j)
&= \frac{{\left(\li-\lj\right)}^2 - \r^2}{2} \geq \frac{{\left(\li-\lj\right)}^2}{2} - 1/2
\enspace.
\end{aligned}
\end{equation}
\Cref{eq:pot0} implies that the potential never increases when two nodes
balance (the only negative term is $-\r^2/2$, but $\r = 1$ implies $\li \neq
\lj$ and, thus, ${(\li-\lj)}^2 \geq 1$). We now calculate the expected
potential after one time step. Each pair of nodes is chosen uniformly at random
with probability $1 / \binom{n}{2}$. When chosen, the potential drops by
$\delta(\bell(t),i,j)$. Therefore,
\begin{equation}
\begin{aligned}
      \Expected{\Phi(\Bell(t+1)) | \Bell(t) = \bell}
&=    \sum_{i = 1}^{n} \sum_{j = i+1}^{n} \frac{1}{\binom{n}{2}}\cdot\left(\Phi(\bell) - \delta(\bell,i,j)\right)\\
&\leq \Phi(\bell) - \frac{1}{2 \binom{n}{2}} \sum_{i = 1}^{n} \sum_{j = i+1}^{n}{\left( \li - \lj \right)}^2 + \frac{1}{2}
\enspace.
\end{aligned}
\end{equation}
We now use
$\sum_{i = 1}^{n} \sum_{j = i+1}^{n}{\left( \li - \lj \right)}^2
= n \cdot \Phi(\bell)$
and obtain
\begin{equation}\label{eq:expected-drop}
\begin{aligned}
\Expected{\Phi(\Bell(t+1)) | \Bell(t) = \bell}
&\leq \Phi(\bell) - \frac{1}{2 \binom{n}{2}} \cdot n\cdot\Phi(\bell) + \frac{1}{2}
\leq \left( 1 - \frac{1}{n} \right) \cdot \Phi(\bell) + \frac{1}{2}
\enspace.
\end{aligned}
\end{equation}

We now partition the time horizon into \emph{rounds} of $n$ consecutive time
steps each and look at \emph{successful} rounds (in which the potential drops
sufficiently). We then argue that $\BigO{\log\left(\Phi(\Bell(0))/n\right)}$
successful rounds suffice for the potential to drop below $n$ and that,
\whpshort/, we have this many successful rounds among the first $\BigO{\log{n}
+ \log{\Delta}}$ rounds.

Let round $r$ consist of the time steps in $\intco{r \cdot n, (r + 1) \cdot
n}$. We assume that the load vector $\Bell(r \cdot n) = \bell$ at the beginning
of round $r$ is fixed. By recursive application of \cref{eq:expected-drop}, we
get
\begin{equation}
\begin{aligned}
      \Expected{\Phi(\Bell((r+1) \cdot n)) | \Bell(r \cdot n) = \bell}
&\leq {\left( 1 - \frac{1}{n} \right)}^n \cdot \Phi(\bell) + \frac{1}{2}\cdot \sum_{i=0}^{n-1}{\left(1-\frac{1}{n}\right)}^i \\
&\leq e^{-1} \cdot \Phi(\bell) + \frac{n}{2}\cdot \left(1-e^{-1}\right)
\enspace,
\end{aligned}
\end{equation}
where we used the inequality ${\left(1-1/n\right)}^n \leq e^{-1}$.
As long as $\Phi(\bell) \geq n$, the last expression is at most
\begin{equation}
     \left(e^{-1} + \frac{n}{2 \Phi(\bell)} \left(1-e^{-1}\right)\right) \cdot \Phi(\bell)
\leq \frac{1 + e^{-1}}{2} \cdot \Phi(\bell)
<    \frac{3}{4} \Phi(\bell)
\enspace.
\end{equation}
Applying the Markov inequality now gives us, for an $\bell$ with $\Phi(\bell) \geq n$,
\begin{equation}\label{eq:drop-prob}
\begin{aligned}
\Probability{\Phi(\Bell((r+1)\cdot n)) \geq \frac{7}{8}\cdot \Phi(\bell) | \Bell(r \cdot n) = \bell}
&\leq \frac{6}{7}
\enspace.
\end{aligned}
\end{equation}

We define a round $r$ to be \emph{successful} if $\Phi(\Bell((r+1)\cdot n))
\leq 7/8 \cdot \Phi(\Bell(r \cdot n)) ~ \lor ~ \Phi(\Bell(r \cdot n)) < n$ and
use $\cE_r$ to denote this event. \Cref{eq:drop-prob} implies
$\Probability{\cE_r} \geq 1/7$.

We now argue that after at most $\rho = \log_{8/7} \left( \Phi(\Bell(0))/n
\right)+1$ successful rounds the potential is smaller than $n$. Let
$r_\rho$ be the $\rho$-th successful round. There are two cases. If there
exists a round $r \leq r_\rho$ for which $\Phi(\Bell(r \cdot n)) < n$, then
$\Phi(\Bell(r_\rho)) < n$ is trivially true since the potential does never
increase. Otherwise, by definition of a successful round, after
$\rho$ successful rounds we have
\begin{equation}
     \Phi(\Bell(r_\rho \cdot n))
\leq {\left(\frac{7}{8} \right)}^\rho \cdot \Phi(\Bell(0))
=    \frac{7}{8} \cdot \frac{n}{\Phi(\Bell(0))} \cdot \Phi(\Bell(0))
<    n
\enspace.
\end{equation}

It remains to show that, \whpshort/, during the first $\BigO{\log{n} +
\log{\Delta}}$ rounds at least $\rho$ rounds are successful. Let the random
variable $X$ denote the number of successful rounds during the first
$168\left(\ln{n} + \log\Delta\right)$ rounds. Since each round is successful
with probability at least\footnote{%
    While the rounds are not independent, the lower bound holds independently
    for each round.
} $1/7$, the random variable $X$ stochastically dominates the binomial random
variable $Y \sim \BinDistr(168\left(\ln{n} + \log\Delta\right), 1/7)$ (written
$X \succeq Y$). Applying Chernoff bounds to $Y$ with its expected value
$\mu = \Expected{Y} = 24\left(\ln{n} + \log\Delta\right)$ gives
\begin{equation}
\begin{aligned}
     \Probability{Y \leq \rho}
&=   \Probability{Y \leq \left(1-\frac{\mu - \rho}{\mu}\right)\mu}
\leq \Exp{-\frac{{\left(\mu - \rho \right)}^2}{\mu^2} \cdot \frac{\mu}{2}}
\stackrel{\eqref{eq:mu-rho}}{\leq}
     \Exp{- \frac{\mu}{8}}
\leq n^{-3}
\enspace,
\label{eq:T1-whp}
\end{aligned}
\end{equation}
where we used the following inequality to bound $\mu - \rho$, holding for
$\Delta \geq 1$:
\begin{equation}
\rho
=    \log_{8/7}{\left(\frac{\Phi(\Bell(0))}{n}\right)}+1
\leq \log_{8/7}{\left(\frac{n\cdot\Delta^2}{n}\right)}+1
\leq \frac{2 \log{\Delta}}{ \log{8}/{7} } + 1
<    12 \log\Delta+1
<    \frac{\mu}{2}
\enspace.
\label{eq:mu-rho}
\end{equation}
Since $X \succeq Y$, \cref{eq:T1-whp} implies that the probability of having
fewer than $\rho$ successful rounds during the first $7n\cdot\mu$ time steps is
smaller than $n^{-3}$. Therefore, \whpshort/, $T_1 \leq 7n \cdot \mu =
\BigO{n\log{n} + n\log{\Delta}}$.
\end{proof}
}
{\subsection[Phase~2]{Phase 2: Improving Individual Tokens}\label{sec:phase2}

We now consider individual tokens. We start our analysis with
\cref{lem:items-improve}, where we show that during any time step any token
with normalized height larger than some constant reduces its height with
probability $\BigOmega{1/n}$ by a constant factor. This is then used in
\cref{lem:phase2} to argue that it takes at most $\BigO{n \log{n}}$ time steps
for all tokens to reach a constant normalized height.

For the sake of the analysis we now define which tokens are selected to be
transferred when two nodes are balanced. Recall that according to the
definition of the process tokens are indistinguishable and therefore arbitrary
tokens may be selected.

Fix a time step $t$ and assume that node $u$ interacts with node $v$. In order
to balance their loads, we need to move tokens from the node with larger load
to the node with smaller load (say from $u$ to $v$). To do so, we start with
the token at maximal height and take every other token until we have selected
required number of tokens. Then we place all tokens on node $v$ in their
original order. An example for this process is sketched in
\cref{fig:balancing-every-second}.

For the remainder, let $c \geq 10$ be a constant and recall that $T_1$ is the
first time step of the second phase. The rule defined above allows us to show
the following lemma.
\begin{lemma}\label{lem:items-improve}
Let $t \geq T_1$ and let $b$ be a token with normalized height $\hhb{t} > 2c$.
Then $\hhb{t+1} \leq 17/20 \cdot \hhb{t}$ with probability at least~$1/n$.
\end{lemma}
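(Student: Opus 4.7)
The plan is to pinpoint a pair-selection event of probability $\geq 1/n$ under which $b$'s height provably drops by the required factor. First, I exploit the conclusion of Phase~1: since $\Phi$ is nonincreasing (see \cref{eq:pot0}) and $\Phi(\Bell(T_1)) < n$ by \cref{lem:phase1}, $\Phi(\Bell(t)) < n$ for all $t \geq T_1$. A one-line counting argument (any node with $\ell_i > \lo + 2$ contributes at least $4$ to $\Phi$) then implies that at least $3n/4$ nodes $v$ satisfy $\ell_v \leq \lt + 2$.

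Let $u$ denote the node currently holding $b$ and let $h = \h{b}{t}$, so $\ell_u \geq h + 1 > \lt + 2c + 1$. Call a node $v$ \emph{light} if $\ell_v \leq \lt + 2$, and let $E$ be the event that the random pair drawn at step $t$ is $\set{u, v}$ for some light $v \neq u$. Since at least $3n/4 - 1$ light nodes are distinct from $u$, we obtain $\Probability{E} \geq (3n/4 - 1)/\binom{n}{2} \geq 1/n$.

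Conditioning on $E$, node $u$ sends $k = \floor{(\ell_u - \ell_v)/2}$ tokens to $v$ in skip mode, namely the tokens at heights $\ell_u - 1, \ell_u - 3, \dots, \ell_u - 2k + 1$. Since $\lt + 2c \geq \ell_v + 2 \geq \ell_u - 2k + 1$ and $h > \lt + 2c$, the token $b$ lies in this top band of $u$. An elementary case analysis by the parities of $\ell_u - 1 - h$ and $\ell_u - \ell_v$ (determining whether $b$ moves and, if it stays, how many of its below-neighbors are removed) shows that $b$'s new height is at most $(h + \ell_v)/2 + 1/2$ in every case. Subtracting $\lt$ and using $\ell_v \leq \lt + 2$ yields $\hhb{t+1} \leq \hhb{t}/2 + 3/2$; and since $\hhb{t} > 2c \geq 20$, we have $3/2 \leq (7/20)\hhb{t}$, giving $\hhb{t+1} \leq (17/20)\hhb{t}$.

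The main obstacle is the bookkeeping in the third step: tracking $b$'s fate (whether it moves, and if so where it lands on $v$; or if it stays, how many of its below-neighbors move) across the four parity combinations, and verifying that each case yields a bound of the form $(h + \ell_v)/2 + O(1)$. Fortunately, all four cases collapse to essentially the same bound up to an additive constant $\leq 1/2$, so neither the value of $c$ nor the constant $17/20$ in the statement is tight.
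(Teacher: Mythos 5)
Your proof is correct and follows essentially the same route as the paper's: both use the Phase-1 bound $\Phi(\Bell(t)) < n$ to show that a constant fraction of nodes is light, lower-bound by $1/n$ the probability that $b$'s host is paired with such a node, and track $b$ through the skip-mode transfer to obtain a new height of the form $(h_b(t)+\ell_v(t))/2 + \BigO{1}$, from which the $17/20$ factor follows. The only differences are cosmetic: you use the threshold $\lt+2$ with fraction $3/4$ where the paper uses $\lo+c$ with fraction $1/2$, and you make the parity case analysis explicit where the paper states the height bound directly.
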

\begin{proof}
The idea of the proof is as follows. We first argue that at any time after the
first phase fewer than half of the nodes have load larger than or equal to
$\lo +c$. This is then used to derive a lower bound on the probability that a
token of normalized height larger than $2c$ takes part in balancing with a node
that has load at most $\lo + c$. Finally, we compute the new height of the
token, which yields the lemma.

We now give the formal proof. Let $S(t) = \set{ v | \ell_v(t) \geq \lo + c}$ be
the set of nodes which have load at least $\lo + c$ and suppose that
$\abs{S(t)} \geq n/2$. Then
\begin{equation}
     \Phi(\Bell(t))
=    \sum_{i=1}^{n} {\left(\ell_i(t) - \lo\right)}^2
\geq \sum_{i \in S} {\left(\ell_i(t) - \lo\right)}^2
\geq \sum_{i \in S} c^2
\geq 100 n/2
>    n
\enspace.
\end{equation}
However, the potential function does not increase over time and, thus,
\cref{lem:phase1} implies that $\Phi(\Bell(t)) \leq n$ for any $t \geq T_1$.
This is a contradiction and, therefore, $\abs{S(t)} < n/2$.

We now proceed to lower bound the probability that $b$ reduces its normalized
height by a constant factor. Let $i$ be the node on which token $b$ is stored
at time $t$. With probability $2/n$, node $i$ is selected as one of the two
nodes for balancing. Let furthermore $j$ be the other node selected for
balancing. Since $\abs{S(t)} < n/2$, node $j$ has load at most $\lo + c$ with
probability at least $1/2$ (independent of $i$'s selection). In that case,
either $\floor*{\frac{\ell_i(t) - \ell_j(t)}{2}}$ or $\ceil*{\frac{\ell_i(t) -
\ell_j(t)}{2}}$ tokens are moved, depending on whether $(i,j)$ or $(j,i)$ are
selected. Using that each other token is moved (see
\cref{fig:balancing-every-second}), carefully bounding the new height gives in
both cases, regardless of whether $b$ is transfered to node $j$ or stays on
node $i$, that the new height of token $b$ becomes at most
\begin{equation}
     h_b(t+1)
\leq \lj(t) + \ceil*{\frac{h_b(t) - \lj(t) + 1}{2}} + 1
\leq \lj(t) + \frac{h_b(t) - \lj(t)}{2} + 2
=    \frac{h_b(t) + \lj(t) + 4}{2}
\enspace.
\end{equation}
We now bound the ratio between the new and the old normalized height of token
$b$. For $\lj(t) \leq \lt + c$ and $h_b(t) \geq \lt + 2c$, this ratio is at
most
\begin{equation}
     \frac{\hhb{t+1}}{\hhb{t}}
=    \frac{\frac{1}{2}\left(h_b(t) + \lj(t) + 4\right) - \lt}{h_b(t) - \lt}
=    \frac{1}{2} + \frac{1}{2}\cdot \frac{\lj(t) - \lt + 4}{h_b(t) - \lt}
\leq \frac{1}{2} + \frac{c + 4}{4c}
\leq 0.85
\enspace ,
\end{equation}
where the last inequality holds since $c \geq 10$.
Therefore, at any time $t \geq T_1$ and for any token $b$ with
$\hhb{t} \geq 2c$, we have $\hhb{t+1} \leq 0.85 \cdot \hhb{t}$ with
probability at least $1/n$.
\end{proof}

We are now ready to show the main lemma for the second phase.
\begin{lemma}\label{lem:phase2}
Let $T_2$ be the first time for which
$\displaystyle \max_{1\leq i \leq n}\set{\ell_i(T_2)} \leq \lo + 2c$ and
$\displaystyle \min_{1\leq i \leq n}\set{\ell_i(T_2)} \geq \lo - 2c$.
\Whp/, $T_2 = T_1 + \BigO{n \log{n}}$.
\end{lemma}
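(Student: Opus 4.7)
The plan is to combine the single-step shrinkage from \cref{lem:items-improve} with a Chernoff concentration over a window of $\BigTheta{n \log n}$ consecutive steps, and then union-bound across tokens. Before tracking individual tokens I would record three preparatory facts. First, $\Phi(\Bell(T_1)) < n$ (from \cref{lem:phase1}) implies $\abs{\ell_i(T_1) - \lo} < \sqrt{n}$ for every $i$, so every token's normalized height at time $T_1$ satisfies $\abs{\hhb{T_1}} < \sqrt{n} + 1$. Second, a short case analysis of the skip-mode rule (as previewed in the proof of \cref{lem:items-improve}) shows that no balancing step ever increases the height of an individual token: every moved token sits at height strictly above $\ell_j(t)$ on the donor side, so its new height $\approx (h_b(t) + \ell_j(t))/2$ is strictly smaller; tokens remaining on the donor side only fall as tokens above them are removed; and tokens on the recipient side are untouched. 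Third, consequently the set of tokens ever exceeding $\hhb{} = 2c$ after $T_1$ is contained in those exceeding $2c$ already at $T_1$, and by Cauchy-Schwarz its cardinality is at most $\sum_i (\ell_i(T_1) - \lo)_+ \leq \sqrt{n \cdot \Phi(\Bell(T_1))} \leq n$.

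For a fixed such ``high'' token $b$, I analyze a window of $T = Cn \log n$ consecutive steps starting at $T_1$, for a sufficiently large constant $C$. By \cref{lem:items-improve}, as long as $\hhb{t} > 2c$ the next step drops $\hhb{}$ by a factor of $17/20$ with probability at least $1/n$, independently of the past (the lemma's bound depends only on the current configuration). The number of such successful drops in the window therefore stochastically dominates $\BinDistr(T, 1/n)$, whose mean is $C \log n$; a standard Chernoff bound shows this count exceeds the required $\log_{20/17}(\sqrt{n}/(2c)) = \BigO{\log n}$ drops except with probability $n^{-3}$. In that event $\hhb{t} \leq 2c$ for some $t \leq T_1 + T$, and by the height non-increase property $\hhb{}$ then remains $\leq 2c$ for the remainder of the phase.

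A union bound over the $\leq n$ initially-high tokens then yields $\max_i \ell_i(T_1 + T) \leq \lo + 2c$ with overall failure probability at most $n^{-2}$. A symmetric argument applied to tokens with $\hhb{} < -2c$ (using the mirror analog of \cref{lem:items-improve} obtained by reflecting loads around $\lo$, together with the dual non-decrease property of individual heights on the ``low'' side) yields $\min_i \ell_i(T_1 + T) \geq \lo - 2c$ within the same asymptotic window, so $T_2 = T_1 + \BigO{n \log n}$ as claimed. The main technical annoyance I foresee is making the individual-height non-increase property precise under the floor/ceiling rounding inherent in the skip-mode rule: this is the step that keeps the union bound at $\BigO{n}$ tokens instead of $\BigO{m}$, and it is where the specific choice of skip mode (rather than stack mode) really pays off.
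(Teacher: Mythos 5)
Your proposal is correct and follows essentially the same route as the paper's proof of \cref{lem:phase2}: a per-token multiplicative height drop from \cref{lem:items-improve}, stochastic domination by a $\BinDistr(\BigTheta{n\log n},1/n)$ variable plus a Chernoff bound, a union bound over the at most $n$ tokens of normalized height above $2c$ (justified by height monotonicity and the potential bound at $T_1$), and a symmetry argument for the minimum load. The differences are only presentational — you spell out the height-non-increase and the Cauchy--Schwarz token count that the paper merely asserts, while the paper formalizes the minimum-load step as an explicit coupling with the negated load vector.
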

\begin{proof}
We first show the claim for the maximal load and then use a coupling argument
to extend the analysis to the minimal load. For the maximal load, we consider a
fixed token $b$ and use \cref{lem:items-improve} to define and bound the
probability of a \emph{successful} time step w.r.t.\ $b$. Then we show that
this event occurs sufficiently often during the first $\BigO{n \log{n}}$ time
steps such that $b$ reaches normalized height at most $2c$ \whp/. Finally, we
show the claim by a union bound over all tokens of normalized height larger
than $2c$.

Let $b$ be an arbitrary but fixed token with $\hhb{t} \geq 2c$. We call a time
step $t$ \emph{successful} if $\hhb{t+1} \leq 17/20\cdot \hhb{t} ~ \lor ~
\hhb{t} \leq 2c$. From \cref{lem:items-improve} we get that time step $t$ is
successful with probability at least $1/n$. Note that while the behavior of two
different tokens may be highly correlated, for one fixed token the lower bounds
hold independently for any time step in the second phase. This allows us to
leverage stochastic dominance of a binomial distribution as follows: Let the
random variable $X_b(\tau)$ denote the number of successful time steps during
the first $\tau$ time steps in the second phase. Since each time step is
successful with probability at least $1/n$, the random variable $X_b(\tau)$
stochastically dominates the binomial random variable $Y_b(\tau) \sim
\BinDistr(\tau, 1/n)$. Applying Chernoff bounds to $Y_b(\tau)$ with $\tau =
12n\log{n}$ gives
\begin{equation}
     \Probability{Y_b(12n\log{n})
\leq \left(1-\frac{3}{4}\right)\Expected{Y_b(12n\log{n})}}
\leq \Exp{-\frac{1}{2}\cdot\frac{9}{16}\cdot 12\log{n}}
\leq n^{-3}
\enspace.
\end{equation}
With the above mentioned stochastic dominance $X_b(\tau) \succeq Y_b(\tau)$, we
get that $X_b(12 n \log{n}) \leq 3 \log n$ with probability at most $ n^{-3}$.
It remains to show that the normalized height of $b$ after $3\log{n}$
successful time steps is at most
$2c$. Observe that $\hhb{T_1} \leq \sqrt{n}$, since otherwise $\Phi(\Bell(T_1))
\geq n$. Therefore, after at most $3\log{n}$ successful time steps in the
second phase, the normalized height of $b$ is at most\footnote{%
    The maximum covers the fact that the analysis does not extend to $\hhb{t} <
    2c$.
}
\begin{equation}
     \hhb{T_1 + 12n\log{n}}
\leq \max\set{\sqrt{n} \cdot {\left(\frac{17}{20}\right)}^{3\log n}, 2c}
\leq 2c
\enspace.
\end{equation}

We now use the union bound on the above analysis over all tokens as follows.
From the bound on the potential function in \cref{lem:phase1} we obtain that
after the first phase at most $n$ tokens remain above the average, since
otherwise the potential would be larger than $n$. Observing that the height of
a token never increases and taking the union bound over all tokens of
normalized height above $2c$ gives us that all tokens have remaining height at
most $2c$ after at most $12 n\log{n}$ interactions with probability
$1-1/n^{-2}$.

We now argue an analogous bound for the minimal load. Let $\bell \in
\mathbb{Z}^{n}$ be the initial load vector of the load balancing process
$\Bell(0) = \bell , \Bell(1), \Bell(2), \dots$ and let $-\bell$ be the initial
load vector of the load balancing process $\Bell'(0) = -\bell, \Bell'(1),
\Bell'(2), \dots$. We can couple the processes such that whenever a pair of
nodes $(u, v)$ is chosen in $\Bell(t)$, the pair of nodes $(v, u)$ is chosen in
$\Bell'(t)$. This coupling ensures (determinstically) that $\ell_i(t) = -
\ell'_i(t)$
and, thus, implies $\Probability{\ell_i(t) = x} = \Probability{\ell'_i(t) =
-x}$. By applying the upper bound on the maximal load to $\Bell'(T_1 +
12n\log{n})$, we get a lower bound on the minimal load in $\Bell(T_1 +
12n\log{n})$. Thus, $T_2 \leq T_1 + \BigO{n\log{n}}$, which concludes the
proof.
\end{proof}
}
{\subsection[Phase~3]{Phase 3: Fine Tuning}\label{sec:phase3}

For the sake of the analysis of the third phase, we use the following rule to
select tokens to transfer when balancing two nodes. We again assume that nodes
operate like stacks, with the following additional rule: both nodes shuffle
their tokens of normalized height in $\set{2, 3, \dots, 2c}$ (if they exist)
before balancing the loads. This rule allows us to show the following lemma,
our main result.

\begin{lemma}\label{lem:phase3}
Let $T_3$ be the first time for which
\begin{math}
\max_{1\leq i \leq n}\set{\ell_i(T_3)} \leq \lt + 1
\end{math}
and for which
\begin{math}
\min_{1\leq i \leq n}\set{\ell_i(T_3)} \geq \lt - 1
\end{math}.
\Whp/, $T_3 = T_2 + \BigO{n \log{n}}$.
\end{lemma}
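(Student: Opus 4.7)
The plan is to mirror the potential-function argument used in Phase~1, but with a tailored nonnegative potential that measures the total load excess above the target upper bound $\lt + 1$, and then to obtain the matching lower bound on the minimum load via the negation coupling introduced in the proof of \cref{lem:phase2}. Concretely, define
\[
    A(t) := \sum_{i=1}^{n} \max\bigl(\ell_i(t) - \lt - 1,\ 0\bigr),
\]
so that $A(t) = 0$ if and only if $\max_i \ell_i(t) \leq \lt + 1$. Since $x \mapsto \max(x - \lt - 1, 0)$ is convex and every balancing step replaces $(\ell_u, \ell_v)$ with $\bigl(\lceil (\ell_u+\ell_v)/2 \rceil,\, \lfloor (\ell_u+\ell_v)/2 \rfloor\bigr)$, a standard majorization (Karamata) argument shows that $A$ is non-increasing along the trajectory, and \cref{lem:phase2} implies $A(T_2) = \BigO{n}$ since each of the at most $n$ nodes contributes at most $2c$ to $A$.

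The main work is to establish the multiplicative drift $\Expected{A(t+1) | A(t) = a} \leq a \cdot \bigl(1 - \BigOmega{1/n}\bigr)$ for all $a \geq 1$. For this I would first show a helper bound: at any time in the interval between $T_2$ and $T_3$ all loads remain in the band $[\lt - 2c, \lt + 2c]$ (since max and min never worsen), and combining this with $\sum_i (\ell_i - \lt) = n(\lo - \lt) \leq n/2$ gives $\abs{L(t)} := \abs{\set{i : \ell_i(t) \leq \lt}} \geq n/(2(2c+1)) = \BigOmega{n}$ via a short averaging contradiction. Second, every ``heavy'' node $i$ with $\ell_i(t) \geq \lt + 2$ contributes at most $2c$ to $A(t)$, so the number of heavy nodes satisfies $\abs{H(t)} \geq A(t)/(2c)$. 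Third, a brief case analysis of the balancing rule shows that whenever the selected pair consists of one node from $H(t)$ and one from $L(t)$, $A$ strictly decreases by at least $1$ (in every subcase of $k := \ell_u - \lt \geq 2$ and $j := \lt - \ell_v \geq 0$). Putting these together, the probability that such a pair is selected is $\abs{H}\cdot\abs{L}/\binom{n}{2} = \BigOmega{A/n}$, which delivers the required drift.

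Applying the high-probability version of the standard multiplicative drift theorem then yields $A(t) = 0$ within $\BigO{n \log A(T_2)} = \BigO{n \log n}$ additional steps with failure probability at most $n^{-\BigOmega{1}}$. For the symmetric lower bound on the minimum load I would reuse the negation coupling from the proof of \cref{lem:phase2}: running the process on the negated initial vector $-\bell$ transforms the statement ``maximum $\leq \lt + 1$'' into ``minimum $\geq \lt - 1$'' for the original chain, so the same drift bound applies. A union bound over these two events gives $T_3 \leq T_2 + \BigO{n \log n}$ \whp/. I anticipate the helper count $\abs{L(t)} = \BigOmega{n}$ to be the main technical obstacle, as it must combine the Phase~2 load band with the relationship $\abs{\lo - \lt} \leq 1/2$; the shuffling rule declared in the setup would instead enable a per-token analysis mirroring \cref{lem:items-improve}, but this refinement is not needed for the potential-based argument above.
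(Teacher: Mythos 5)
Your argument is sound but takes a genuinely different route from the paper. The paper works token by token: it first shows, via the same averaging computation you sketch, that a constant fraction $\gamma \geq 1/(4c+2)$ of the nodes has load at most $\lt$, and then tracks a fixed token $b$ with $\hhb{t} > 1$. This is where the shuffling rule from the setup is actually needed --- it guarantees that $b$ becomes the topmost token with probability at least $1/(2c)$, so that with probability $\BigOmega{1/n}$ per step (i) $b$'s node is selected, (ii) $b$ is on top, and (iii) the partner has load at most $\lt$, in which case $\hhb{t+1} \leq 1$; a union bound over the at most $2cn$ tokens of normalized height above $1$ finishes the upper bound, and the negation coupling handles the minimum exactly as you propose. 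Your aggregate potential $A(t)$ replaces this token-level bookkeeping entirely: the Karamata monotonicity, the bound $\abs{H(t)} \geq A(t)/(2c)$, and the claim that every heavy--light pairing decreases $A$ by at least $1$ all check out (for $k = \ell_u - \lt \geq 2$ and $j = \lt - \ell_v \geq 0$ the drop is exactly $\min(k-1,\, j+1) \geq 1$), and together with $\abs{L(t)} = \BigOmega{n}$ --- which is not really an obstacle, it is the same one-line averaging argument the paper itself performs, valid for all $t \geq T_2$ because balancing never worsens the extremes --- this yields the claimed multiplicative drift. What your approach buys is that the artificial shuffling rule and the union bound over tokens become unnecessary; what it costs is the appeal to the high-probability multiplicative drift theorem, an external tool the paper avoids by keeping everything at the level of elementary Chernoff bounds (you could restore self-containedness by rerunning the Phase~1 ``successful rounds'' argument on $A$, using that $A$ is non-increasing and drops by a constant factor per round of $n$ steps with constant probability). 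One cosmetic point: after Phase~2 each node contributes at most $2c-1$ rather than $2c$ to $A$, which only strengthens your bounds.
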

\begin{proof}
We again start by analyzing the maximal load. We first show that at any time
step after the second phase at least a constant fraction of nodes has load at
most $\lt$. Then we consider an arbitrary but fixed token $b$ with $\hhb{t} >
1$ at time $t$ and show that with probability $\BigOmega{1/n}$ we have
$\hhb{t+1} \leq 1$. This is used to show that, \whpshort/, $\hhb{\tau} \leq 1$
for $\tau = \BigO{n \log n}$. The claim then follows from a union bound over
all tokens above normalized height $1$.

Fix a time step $t \geq T_2$ and let $\gamma$ be the fraction of nodes that
have load at most $\lt$ at time~$t$. We use the definition of the rounded
average load and \cref{lem:phase2} to compute
\begin{equation}
\begin{aligned}
&
n \cdot \left( \lt + 0.5 \right)
\geq n \cdot \lo
=    \sum_{1 \leq i \leq n} \li(t)
=    \smashoperator[r]{\sum_{ \li(t) > \lt}} \li(t) + \smashoperator[lr]{\sum_{\li(t) \leq \lt}} \li(t)
\\{}\geq{}&
     \smashoperator[r]{\sum_{\li(t) > \lt}} \left(\lt + 1\right) + \smashoperator[lr]{\sum_{\li(t) \leq \lt}} \left(\lt - 2c\right)
\geq n\cdot\left(1 - \gamma\right)\cdot \left(\lt + 1\right) + n\cdot\gamma\cdot\left(\lt - 2c\right)
\enspace.
\end{aligned}
\end{equation}
Therefore, $\gamma \geq {1}/({4c+2})$ is a constant.

Similar to the analysis of the second phase, we now consider an arbitrary but
fixed token $b$. Fix a time step $t \geq T_2$ and a token $b$ with
$\hhb{t} > 1$. Let $i$ be the node on which $b$ resides before time step $t$.
We have the following events.
\begin{enumerate}[nolistsep]
\item

    Node $i$ is selected for balancing: in any time step, $i$ is selected with
    probability $2/n$.

\item

    Token $b$ becomes the top-most token: all tokens $b'$ on node $i$ of
    normalized height $\hh{b'}{t} > 1$ are shuffled. Since there exist at most
    $2c$ such tokens after the second phase, $b$ becomes the top-most token
    with probability at least $1/(2c)$.

\item

    The other node has load at most $\lt$: since the fraction of such nodes is
    least $\gamma$, such a node is selected as the balancing partner with
    probability at least $\gamma$.

\end{enumerate}
We say $b$ is \emph{successful} in time step $t$ if all three of these events
occur. Observe that in this case $\hhb{t+1} \leq 1$. Let $p_{b}(t)$ be the
probability of a successful time step. Combining above probabilities, we get
$p_{b}(t) \geq 2/n \cdot 1/(2c) \cdot \gamma = \BigOmega{1/n}$.

We now consider $\BigO{n \log n}$ time steps after the second phase. Token $b$
is not successful at least once during these time steps with probability
\begin{equation}
     \prod_{t=1}^{\BigO{n\log{n}}}\left(1 - p_{b}(t)\right)
\leq {\left(1-\BigOmega{\frac{1}{n}}\right)}^{\BigO{n\log{n}}}
\leq n^{-\BigOmega{1}}
\enspace.
\end{equation}
That is, for a suitable choice of constants, $b$ reaches height
$1$ after at most \BigO{n\log{n}} time steps with probability $1 - 1/n^3$.
The upper bound on the load now follows from a union bound, since at most
$2c\cdot n$ tokens have normalized height above $1$ after the second phase. For
the lower bound on the load, precisely the same argument as in the proof of
\cref{lem:phase2} can be used.
\end{proof}

The proof of \cref{thm:main-result} now follows from a union bound over the
results from \cref{lem:phase1} for the first phase, \cref{lem:phase2} for the
second phase, and \cref{lem:phase3} for the third phase.

}

\printbibliography
\end{document}